\newcommand{\N}{{\mathbb{N}}}
\newtheorem{theorem}{Theorem}
\newtheorem{corollary}{Corollary}
\newtheorem{lemma}{Lemma}
\newtheorem{proposition}{Proposition}
\newtheorem{definition}{Definition}
\newenvironment{proof}{\noindent\textit{Proof.}}{\hfill $\Box$\par\bigskip}
\title{What is the least number of moves needed to solve the $k$-peg Towers of Hanoi problem?}%
\author{Roberto Demontis \thanks{robdemontis@gmail.com}}%
\date{ }%
\begin{document}
\maketitle

\begin{abstract}
We prove that the solutions to the $k$-peg Tower of Hanoi problem given by Frame
and Stewart are minimal. 
\end{abstract}

This paper solves the problem of finding the least number of moves needed to
transfer a Tower of Hanoi of $n$ disks, from an initial peg to another one of the $k-1$ other
pegs. This problem generalizes a well known puzzle
proposed and solved in \cite{Lucas:Tour} for the case of three pegs. The
generalization to the case of $k$ pegs was proposed in \cite{Stewart:Advanced}, and can
be phrased as follows: ``Given $k$ pegs and $n$ disks initially stacked on one peg in
decreasing order of size (i.e. no larger disk can be on top of a smaller one), how many
moves are needed to transfer the stack of disks from the initial peg to another peg,
assuming that you can move only one disk at a time and it is not allowed to
place a larger disk on top of a smaller disk?''

Two solutions to this problem were proposed in \cite{Frame:Solution} and
\cite{Stewart:Solution} using methods that have been shown to be equivalent in
\cite{Klavzar-Milutinuvic:Frame}. However, as already observed in \cite{Dunkel:Note},
both presumed solutions make use of a special assumption, which up to date is still
unproven, and restricts their proofs of optimality only to algorithms of a certain
scheme. These two models of solution can be however regarded as empirically optimal, as
verified for up to 20 disks in \cite{Houston-Masum:Exploration}.

In this paper we prove 
\begin{theorem} The solutions to the $k$-peg Tower of Hanoi problem given in \cite{Frame:Solution} and
\cite{Stewart:Solution} are minimal.
\end{theorem}

We introduce now some preliminary terminology and notation. We label $n$ disks with the number $1\ldots n$, with the convention that the disk $j$ is \emph{larger} than
the disk $i$ if and only if $j>i$. We use the symbol $\infty$ to indicate a free peg. The triple $\textbf{(j,i,t)}$, with $1\le j<i\le\infty$ and $j<t\le \infty$, denotes that the disk $j$
moves from being on the disk $i$ to be placed on the disk $t$. We say that \textbf{$j\leq n$ 
is freed on a peg}, when it moves for the first time (i.e. when we find for the first time 
the move $(j,j+1,\infty)$ or, in the case $j=n$ the move $(j,\infty,\infty)$). 

\begin{definition}
A sequence of moves is said to be \textbf{demolishing sequence} if:
\begin{enumerate}
	\item it ends its moves with the triple $(n,\infty,\infty)$ and 
	\item the triple $(n,\infty,\infty)$ appears exactly once.
\end{enumerate} 
We can always split a general sequence $S$ solving the Tower of Hanoi problem into a \textbf{demolishing phase} and a\textbf{ reconstructing phase}, we call demolishing phase the moves until the first triple $(n,\infty,\infty)$, and reconstructing phase the remaining moves. 
\end{definition}

Let $H_k(n)$ be the minimum number of moves needed to transfer a Tower of Hanoi of $n$ disks with $k$ pegs. We have:

\begin{lemma}
Let $S$ be a minimal sequence solving Tower of Hanoi problem, then its demolishing phase is composed by $\frac{H_k(n)+1}{2}$ moves.
\end{lemma}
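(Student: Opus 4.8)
The plan is to exploit the time-reversal symmetry of the puzzle together with the minimality of $S$. First I would record the (standard) fact that in a minimal sequence the largest disk $n$ is moved exactly once: if it were moved twice, then between two consecutive occurrences of $(n,\infty,\infty)$ all of the disks $1,\dots,n-1$ would have to be cleared and then regathered, which is wasteful and contradicts minimality. Consequently $S$ is a demolishing sequence and the split into the two phases is exactly the one in Definition 1, with the demolishing phase ending precisely at the unique triple $(n,\infty,\infty)$.

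Next I would analyze the configuration reached at the end of the demolishing phase. Immediately before disk $n$ moves, all of the disks $1,\dots,n-1$ must lie on the two pegs that are neither the source nor the destination, in some configuration $\delta$, and the destination peg must be empty. Thus the demolishing phase consists of a block bringing the $n-1$ smaller disks from the source peg to the configuration $\delta$ (of cost $P$), followed by the single move of disk $n$; its length is $P+1$. The reconstructing phase is then exactly the block bringing the $n-1$ smaller disks from $\delta$ onto the destination peg (of cost $Q$), of length $Q$, so that $H_4(n)=P+Q+1$.

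The heart of the argument is a symmetry claim: writing $m_1(\delta)$ for the least cost of passing from the source peg to $\delta$ and $m_2(\delta)$ for the least cost of passing from $\delta$ to the destination peg, I claim $m_1(\delta)=m_2(\delta)$. I would prove this by reversal: reading backwards a sequence realizing $m_2(\delta)$ gives a legal sequence that carries the $n-1$ disks from the destination peg back to $\delta$. In both sub-problems disk $n$ sits frozen at the bottom of a single peg, the two target pegs carrying $\delta$ coincide, and the fourth (auxiliary) peg is free; hence relabelling the source and destination pegs identifies the two sub-problems, so their least costs agree. Call this common value $m(\delta)$.

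Finally I would combine these facts. Minimizing over the middle configuration yields $H_4(n)=\min_\delta\bigl(m_1(\delta)+m_2(\delta)\bigr)+1=2m^\ast+1$, where $m^\ast=\min_\delta m(\delta)$; in particular $H_4(n)$ is forced to be odd. For the minimal sequence $S$ with middle configuration $\delta$ we have $P\ge m_1(\delta)\ge m^\ast$ and $Q\ge m_2(\delta)\ge m^\ast$, so $H_4(n)=P+Q+1\ge 2m^\ast+1=H_4(n)$ collapses all inequalities, giving $P=Q=m^\ast$. The demolishing phase therefore has length $P+1=m^\ast+1=\frac{H_4(n)+1}{2}$. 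I expect the main obstacle to be the careful justification of the symmetry claim $m_1(\delta)=m_2(\delta)$ — in particular checking that reversing a sub-sequence preserves legality and that the peg relabelling genuinely matches the two sub-problems — together with the verification that disk $n$ moves only once, which underpins the clean decomposition $H_4(n)=P+Q+1$.
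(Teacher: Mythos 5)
Your proposal is correct and follows essentially the same route as the paper: both arguments hinge on the fact that disk $n$ moves exactly once and on time-reversal of the prefix/suffix around that move to force the two phases to have equal length. Your write-up is in fact more careful than the paper's, since you make explicit the peg-relabelling (treating the peg holding disk $n$ as free) that justifies identifying the reversed demolishing prefix with a reconstructing phase.
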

\begin{proof}
We observe the following: suppose we have made the move $(n,\infty,\infty)$, and let this move be
the $l$-th move; for every $u<l$, let $(r,s,y)$ be the $l-u$-th move; then continuing
with the symmetric sequence of moves in which $(r,y,s)$ the $l+u$-th move for every $u<l$, we can
reconstruct the Tower on a peg different from the original one. Thus we may conclude that
in a minimal sequence the triple $(n,\infty,\infty)$ appears exactly once. As a consequence, a minimal sequence is known if its demolishing phase or its reconstructing phase is known. (We note that in general it is not always the case that all minimal sequences have
symmetric demolishing and reconstructing phases, but in any case there are minimal
sequences with this feature). It means that if $S$ is minimal the number of moves of its demolishing phase minus 1 (the move $(n,\infty,\infty)$ must be the same of the number of moves of its reconstructing phase, otherwise we could use the shortest phase to build a sequence shorter than $S$.
As a consequence we have that a demolishing phase of a minimal sequence is composed by $\frac{H_k(n)+1}{2}$ moves.
\end{proof}

As a consequence of Lemma 1, if we want to look for a minimal solution of the $k$-peg Tower of Hanoi problem, we can focus on minimal demolishing phases. Moreover, as we proved that it is always possible to build a minimal sequence having
symmetric demolishing and reconstructing phases, we restrict our attention to sequences having this feature, which we call \textbf{minimal symmetric sequences}.

\begin{definition}
A set of disks arranged on the same peg is said to be a \textbf{stack}. 
\end{definition}

As we have only k pegs, we can build at most k stacks. We can from now restrict our attention for $n>k-1$, as it is easy to calculate the minimum number of moves needed to solve a k-peg Tower of Hanoi problem of $n\leq k-1$ disks.
For minimal demolishing sequences we have the following property:

\begin{proposition}\label{prop:1}
Let us consider a Tower of Hanoi with $n$ disks and one of its minimal demolishing sequence $S$. Suppose that the disks have been arranged on $r\leq k-1$ stacks at the end of $S$. Let
$n,n-1$ and $ j_1< \ldots< j_{r-2}$ be the disks that lie at the bottom
of the $r$ stacks in $S$. 
Then during the demolishing phase no disk $y$, with $y> j_1$, is arranged on the peg on which the disk $j_1$ will be stacked when the move $(n,\infty,\infty)$ is performed.

\end{proposition}

\begin{proof}

Let $S$ be a minimal sequence of moves yielding a demolishing phase. 
 For the sake of a contradiction, let us assume that at some instant during the computation some disk $y$, with $y> j_1$ is stacked on the peg on which $j_1$ will be stacked when the move $(n,\infty,\infty)$ is performed and we assume that $y$ is the last disk bigger than $j_1$ that happens to be stacked on this peg. 
Notice that since $y$ has to leave the peg to make room for $j_1$, there must
be in $S$ a move $(y,\infty,p)$. Since no disk larger than $y$ will ever be stacked on that
peg starting from this instant, we may avoid the move $(y,\infty,p)$ and build a sequence of
moves $S'$ as follows:
\begin{enumerate}
  \item $S'$ coincides with $S$ up to, but not including, the last move of the form
      $(y,\infty,p)$.
  \item Delete the move $(y,\infty,p)$ and all subsequent moves that concern the disk
      $y$.
  \item In $S'$  substitute the moves of the form $(x,z,\infty)$
      going to peg on which $j_1$ will be stacked when the move $(n,\infty,\infty)$ in $S$, with the move $(x,z,y)$.
  \item Iin $S'$  substitute the moves of the form $(x,z,y)$
      placing a disk on the disk $y$, with the move $(x,z,y')$ (where $y'$ is
      the disk, possibly $y'=\infty$, on which lies $y$ in the computation $S$ when a
      move $(x,z,y)$ is performed).
\end{enumerate}
The sequence $S'$ is shorter than $S$, since in $S'$ the move $(y,\infty,p)$ is missing. But
this contradicts our assumption that $S$ is minimal.
\end{proof}

Thanks to the previous Proposition, the following sentences are true:

\begin{corollary}
If we arrange the disks on $k-1$ pegs, there is at least one peg on which no disk bigger than $j_1$ is ever
      stacked. In other words all turns on which a disk bigger than $j_1$ is moved may be simulated by considering the  $(k-1)$ -peg Tower of Hanoi problem.
\end{corollary}

\begin{corollary}
By induction on $k$ we can prove Proposition 1 also for the disks $ j_2< \ldots< j_{r-2}$ .
\end{corollary}

\begin{definition}
Given a sequence of moves $S$ of $n$ disks, we denote with $C_S(j)$ the number of moves
of a disk $j$ in $S$. $C_S(j)$ is said \textbf{cost} of $j$. 
\end{definition}

\begin{corollary}
The cost of each disk lying at the bottom of the stacks, after performing the
     move $(n,\infty,\infty)$ in a minimal demolishing phase, is one.
\end{corollary}
\begin{proof}
To see this, let $x$ be one of these disks. We know that when a move $(x,x+1,\infty)$ is made for
      the first time, $x$ transfers to a free peg; moreover, by the previous Proposition and by Corollary 2
      we know that there is a peg to which no disk bigger than $x$ is ever
      transferred. Therefore, suitably dealing with the disks that are smaller than
      $x$, we can arrange that $x$ is eventually freed
      just on a peg never used by disks larger than $x$: while being on such a peg,
      the disk is of no obstacle to any move and thus need not be moved.
\end{proof}

\begin{definition}
A sequence $S$ is said an \textbf{ideal sequence} of $n$ disks if and only if:
  \begin{enumerate}
	\item it is  a minimal demolishing sequence or
	\item there exists a minimal sequence $S'$ of $m+n$ disks such that every disk bigger than $n$ has been moved using only $k-1$ pegs as described by Corollary 1 with $j_1=n$ and the subsequence until the move $(n,n+1,\infty)$ is $S$.
\end{enumerate}
\end{definition}

\begin{definition}
Define $\textit{S}_k(n)=\{S:S $ is an ideal sequence of $n$ discs using $k$ pegs $\}$ and  
 $X=\{x \in \N: \exists j,n,k$ such that $C_S(j)=x$ for some $S \in \textit{S}_k(n)\}.$ 
\end{definition}
We will consider the ordered set $(X,<)$ with the convention that $x_r<x_{r+1}$ for each $r\geq 1$.  Considering simple  minimal demolishing sequences of k disks we can conclude that
$1, 2\in X$. As example $$(1,2,\infty)(2,3,\infty)\ldots (k-1,k,\infty)(1,\infty,2)(k,\infty,\infty).$$

\begin{definition}
 For all $x_i\in X$ let $M_k(x_i)=\sup_n \max_{S\in \textit{S}_k(n)}|\{ j:C_S(j)=x_i\}|.$ 
\end{definition}

\begin{lemma}
Fix an integer $n\geq k-1$. Let i be the maximum integer such that $\sum_{t=1}^{i-1}M_k(x_t)\leq n$ and set $T= n- \sum_{t=1}^{i-1}M_k(x_t)$, eventually $T=0$. Then,
\begin{eqnarray}
H_k(n)\geq 1+2*(k-2) +\sum_{t=2}^{i-1}M_k(x_t)2x_t+ T2x_{i}.
\end{eqnarray}
\end{lemma}
\begin{proof}
  By Proposition 1 we know that $M_k(1)=k-1$.
Moreover by Lemma 1 we know that only one disk costs 1 move  and as a result at most $k-2$ disks costs 2 moves.
If $t>1$ by definition of $M_k(x_t)$ and considering minimal symmetric sequences, we have that $M_k(x_t)$ is the maximum number of disks that can be moved at cost $2x_t$. We are not sure that every $x_i\in X$ is a cost in minimal demolishing phase of a minimal symmetric sequence but adding these costs we are decreasing our valuation and as a consequence we obtain (1).
\end{proof}

\begin{definition}
For all $x_i\in X$ let $L_k(x_i)=\sum_{t=1}^i M_k(x_t)$.
\end{definition}
Thus $L_k(x_i)$ is an upper
bound on the maximum  number of disks that we can move in a minimal demolishing sequence at a cost less or equal to $x_i$ considering all the possible costs in the set of all ideal sequences. 

\begin{definition}
Define $\textit{SS}_k(n)=\{T:T $is a minimal symmetric sequence of $n$ discs on $k$ pegs$\} $
\end{definition}
\begin{definition}
Let $I_k(x_i)=\sup_n \max_{T\in \textit{SS}_k(n)}|\{ j:C_T(j)\leq x_i\}|.$
\end{definition}

In other words $I_k(x_i)$ is the maximum number of disks that we can transfer from a peg to another peg at a cost smaller or equal to $x_i$ in the set $\textit{SS}_k(n)$. 

\begin{definition}
Define $m_k(1)=I_k(1)$ and for $i\geq 2$ define $m_k(x_i)=I_k(x_i)-I_k(x_{i-1})$.
\end{definition}

\begin{lemma}
$I_k(x_i)\leq L_k(x_{i-1})$
\end{lemma}
\begin{proof}
  For all $i\geq 2$ if the disks have a cost smaller or equal to $x_i$ in a minimal symmetric sequence $S$, they must have cost $x_{i}/2$ in the demolishing sequence of $S$. By definition we have that the demolishing sequence of $\textit{SS}_k(n) $ are subset of $ \textit{S}_k(n)$, as a consequence $x_{i}/2\in X$, but also $x_i \in X$ then $x_{i}/2\leq x_{i-1}$. In conclusion we have  $ I_k(x_i)\leq L_k(x_{i}/2) \leq L_k(x_{i-1})$. 
\end{proof}

 We note that $L_k(x_{i-1})$ is the maximum value that $I_k(x_i)$ could be. Note that if we increase the number of disks that we can move at a smaller cost we decrease the lenght of a sequence. As a consenquence it seems to make sense to consider the case $I_k(x_i)= L_k(x_{i-1})$ and $m_k(x_i)=M_k(x_{i-1})$. More formally we have the following

\begin{lemma}
Fix an integer $n\geq k-1$ and suppose for all $ i\geq3$ $m_k(x_i)=M_k(x_{i-1})$.
 Let i be the maximum integer such that $\sum_{t=1}^{i}m_k(x_t)\leq n$ and set $T'= n- \sum_{t=1}^{i}m_k(x_t)$. Then
\begin{eqnarray}
H_k(n)\geq1+2*(k-2) +\sum_{t=3}^{i}m_k(x_t)2x_{t-1}+ T'2x_{i}.
\end{eqnarray}
 \end{lemma}
\begin{proof}
If $m_k(x_i)=M_k(x_{i-1})$ as a result $(2)=(1)$, then by Lemma 2  we have $H_k(n)\geq (2)$.
\end{proof}

Now, as a consequence of Lemma 4,  we have that when we suppose $m_k(x_i)=M_k(x_{i-1})$ the telescopic sum defined in (2) is smaller than $H_k(n)$ for all $n$. Then, as we are looking for a minimum of $H_k(n)$, we can estimate this telescopic sum  just  in the case of $m_k(x_i)=M_k(x_{i-1})$ for all $i\geq 3$.

 \begin{lemma}
If for all $i\geq 2$ $I_k(x_i)= L_k(x_{i-1})$, then $I_k(x_t)\leq {k-3+t\choose k-2}$.
\end{lemma}
\begin{proof}
 By induction on $k$ and $i$, it is trivial to prove it for $k=3$ and $ i=2$, suppose this is true for $ k-1$ and for all $p<i$.  Using Proposition 1, we know that in a minimal sequence $S$ we can always find a disk $j_k$ such that every disk bigger than $j_k$ has been moved using only $k-1$ pegs. Then we can split $I_k(x_i)$  between $I_{k-1}(x_i)$ and the number of disks smaller than $j_k+1$. Let $N$ be this number. 

Moreover the demolishing sequence $S$ just until the first move of $j_k$ is an ideal sequence, as a consequence for all disk $j\leq j_k$  $ C_S(j)\leq x_{i-2}$ and $ N\leq L_k(x_{i-2})$. As a result 
 $$I_k(x_i)\leq I_{k-1}(x_i)+L_k(x_{i-2})=I_{k-1}(x_i)+I_k(x_{i-1}).$$ Then by induction  we have $I_k(x_i)\leq {k-4+i\choose k-3}+{k-4+i\choose k-2}={k-3+i\choose k-2}$.
\end{proof}

As a consequence the maximum number of disk that we can move at cost $x_i$ is $m_k(x_i)=I_k(x_i)-I_k(x_{i-1})= {k-4+i\choose k-3}$, this implies that $m_k(x_i)$ must be different to 0.

As a result if $m_k(x_i)$ is different to 0, we prove the following Lemmas:

\begin{lemma}
 For all $i$  $x_i\geq 2x_{i-1}$.
\end{lemma}
\begin{proof}
For the sake of a contradiction, let $x_y\in X$ the smallest number such that $x_y< 2x_{y-1}$. 
By Lemma 1 we know $I_k(x_y)$ is equal to the number of disks that can be freed at the maximum cost of $x_y/2$. As we are supposing that $x_y\in X$ is the smallest number such that  $x_y< 2x_{y-1}$, we must have $x_{y-1}\geq 2x_{y-2}$,  as a consequence we have $x_{y-2}\leq x_{y-1}/2<x_y/2<x_{y-1}<x_y$. Then we claim $I_k(x_y)=I_k(x_{y-1})$, for if not then there would be a minimal demolishing sequence $S$ and a disc $j$ such that  $x_{y-1}/2< C_s(j)\leq x_y/2$. But this would mean $x_{y-2}< C_s(j)<x_{y-1}$ which would be a contradiction. Moreover if $I_k(x_y)=I_k(x_{y-1})$ then $m_k(x_y)= 0$. We find a  contradiction and we must have $x_i\geq 2x_{i-1}$. 
\end{proof}

Now we know $x_1=1$, this means that under the assumptions of Lemma 5  $x_i\geq 2^{i-1}$. As we are looking to estimate the minimum for the telescopic sum (2) we can suppose $x_i=2^{i-1}$.

In conclusion we kwon that the minimum cost we can suppose is $x_i=2^{i-1}$ and the maximum number of disk that we can move at cost $x_i$ is $m_k(x_i)={k-4+i\choose k-3}$, then we can write (2) as follows: 
\begin{eqnarray}
\sum_{t=1}^i 2^{t-1}{k-4+t\choose k-3}+ 2^i\left(n-{k-4+i\choose k-3}\right)
\end{eqnarray}

 with ${k-4+i+1\choose k-3}>n\geq {k-4+i\choose k-3}.$

\begin{lemma}
$H_k(n)\geq (3)$.
\end{lemma}
\begin{proof}
By Lemma  4, 5, 6.
\end{proof}
 As (3) is exactly the number of moves needed to solve the Tower of Hanoi problem given by the methods proposed in \cite{Frame:Solution} and
\cite{Stewart:Solution} for the case of $k$ pegs as proved in \cite{Klavzar-Milutinuvic:Frame}, we can conclude that these methods produce minimal sequences and in conclusion we proved Theorem 1.


\end{document}